\newcommand{\R}{{\mathbb R}}
\newcommand{\N}{{\mathbb N}}
\DeclareMathOperator{\vol}{vol}
\DeclareMathOperator{\inter}{int}
\DeclareMathOperator*{\argmax}{arg\,max}
\renewcommand{\vec}[1]{\mathbf{#1}}
\newcommand{\bv}[1]{\mathbf{#1}}
\newcommand{\calA}{{\mathcal{A}}}
\newcommand{\calB}{{\mathcal{B}}}
\newcommand{\calC}{{\mathcal{C}}}
\newcommand{\calF}{{\mathcal{F}}}
\newcommand{\calP}{{\mathcal{P}}}
\newcommand{\calX}{{\mathcal{X}}}
\newcommand{\avec}{{\vec{a}}}
\newcommand{\bvec}{{\vec{b}}}
\newcommand{\dvec}{{\vec{d}}}
\newcommand{\fvec}{{\vec{f}}}
\newcommand{\mvec}{{\vec{m}}}
\newcommand{\pvec}{{\vec{p}}}
\newcommand{\vvec}{{\vec{v}}}
\newcommand{\xvec}{{\vec{x}}}
\newcommand{\yvec}{{\vec{y}}}
\newcommand{\zvec}{{\vec{z}}}
\newcommand{\icnn}{{\mathrm{ICNN}}}
\newcommand{\relu}{{\mathrm{ReLU}}}
\newcommand{\mat}[1]{\mathbf{#1}}
\newcommand*\tageq{\refstepcounter{equation}\tag{\theequation}}
\newtheorem{assumption}{Assumption}
\newtheorem{definition}{Definition}
\newtheorem{problem}{Problem}
\newtheorem{theorem}{Theorem}
\newtheorem{proposition}{Proposition}
\def\@proofnamefont{\scshape}
\def\@proofindent{\indent}
  \def\@proofnamefont{\itshape}
  \def\@proofindent{\noindent}
  \providecommand\BibTeX{{%
    \normalfont B\kern-0.5em{\scshape i\kern-0.25em b}\kern-0.8em\TeX}}}
\begin{document}

\title[Fast and Reliable $N - k$ Contingency Screening with Input-Convex Neural Networks]{Fast and Reliable $N - k$ Contingency Screening with \\ Input-Convex Neural Networks}

\author{Nicolas Christianson}\authornote{Part of this work was done while N. Christianson was an intern at Microsoft Research, Redmond, WA, USA.}
\affiliation{%
  \institution{California Institute of Technology}
  \city{Pasadena}
  \state{CA}
  \country{USA}
}
\email{nchristi@caltech.edu}

\author{Wenqi Cui}
\affiliation{%
  \institution{California Institute of Technology}
  \city{Pasadena}
  \state{CA}
  \country{USA}
}
\email{wenqicui@caltech.edu}

\author{Steven Low}
\affiliation{%
  \institution{California Institute of Technology}
  \city{Pasadena}
  \state{CA}
  \country{USA}
}
\email{slow@caltech.edu}

\author{Weiwei Yang}
\affiliation{%
  \institution{Microsoft Research}
  \city{Redmond}
  \state{WA}
  \country{USA}
}
\email{weiwei.yang@microsoft.com}

\author{Baosen Zhang}
\affiliation{%
  \institution{University of Washington}
  \city{Seattle}
  \state{WA}
  \country{USA}
}
\email{zhangbao@ece.uw.edu}

\renewcommand{\shortauthors}{Christianson et al.}

%%
%% The abstract is a short summary of the work to be presented in the
%% article.
\begin{abstract}
  Power system operators must ensure that dispatch decisions remain feasible in case of grid outages or contingencies to prevent cascading failures and ensure reliable operation. However, checking the feasibility of all $N - k$ contingencies -- every possible simultaneous failure of $k$ grid components -- is computationally intractable for even small $k$, requiring system operators to resort to heuristic screening methods. Because of the increase in uncertainty and changes in system behaviors, heuristic lists might not include all relevant contingencies,  generating false negatives in which unsafe scenarios are misclassified as safe. In this work, we propose to use input-convex neural networks (ICNNs) for contingency screening. We show that ICNN reliability can be determined by solving a convex optimization problem, and by scaling model weights using this problem as a differentiable optimization layer during training, we can learn an ICNN classifier that is both data-driven and has provably guaranteed reliability. Namely, our method can ensure a zero false negative rate.  We empirically validate this methodology in a case study on the IEEE 39-bus test network, observing that it yields substantial ($10\text{-}20\,\times$) speedups while having excellent classification accuracy.
\end{abstract}

%%
%% Keywords. The author(s) should pick words that accurately describe
%% the work being presented. Separate the keywords with commas.
\keywords{Contingency analysis, reliable machine learning, differentiable convex optimization}

\maketitle

\section{Introduction}

Power systems face increasing uncertainty due to climate change, resulting from significant expansion in the development of variable renewable generation resources and environmental factors such as extreme weather events and wildfires. To ensure reliable operation in the face of this growing uncertainty, power system operators must dispatch generation resources in a manner that anticipates and is robust to potential asset outages, such as the failure of a transmission line. Failing to anticipate and prepare for such outages can lead to cascading failures that may necessitate load shedding, as occurred in the Texas blackouts in 2021 \cite{busby_cascading_2021}.

To assess and plan for the impacts of potential asset failures before they happen, system operators must perform contingency analysis to identify which failures will result in a post-failure operating point that is infeasible \cite[Chapter 3]{bienstock_electrical_2015}. In particular, NERC regulations mandate that US power systems remain stable for all $N - 1$ contingencies -- contingencies involving the loss of a single asset -- and that system operators plan for the multi-element contingencies with the most severe consequences \cite{noauthor_nerc_2010}. Assessing the security of and planning for such $N - k$ contingencies -- simultaneous losses of $k > 1$ assets -- is crucial for reliable system operation, as such correlated failures can cause severe blackouts, such as the 2003 Northeast blackout \cite{noauthor_final_2004}. However, the complexity of contingency analysis grows exponentially in the number of simultaneous failures $k$ that is considered: in a system with $N$ components, the number of such contingencies is $\Omega(N^k)$, which quickly becomes intractable for $k > 1$ in large-scale power systems.

To combat this intractability and enable the efficient screening of $N - k$ contingencies for $k > 1$, a number of approaches have been developed in the recent literature to accelerate contingency analysis. These methods fall into one of two categories: (1) heuristic approaches using, e.g., machine learning to predict contingency feasibility, and (2) exact methods that reduce computational expense by certifying feasibility of a collection of contingencies using ``representative'' constraints. However, these methods fall short on two fronts. The heuristic approaches (1) come with no rigorous guarantees on prediction accuracy or reliability; thus, they might misclassify a critical contingency as feasible, causing system outages. On the other hand, the exact methods (2), while reliable, are typically hand-designed rules which cannot take advantage of historical data to accelerate contingency analysis by focusing on the most relevant or likely contingencies for a particular power system. To enable reliable and efficient screening of higher-order $N - k$ contingencies in modern power systems, new approaches are needed to bridge the data-driven paradigm of machine learning with the strong reliability guarantees of exact methods.

\subsection{Contributions}

In this work, we confront this challenge, proposing a machine learning approach to screening $N - k$ contingencies that is fast, data-driven, and comes with \emph{provable} guarantees on reliability. In particular, we propose to use \emph{input-convex neural networks} (ICNNs) to screen arbitrary collections of contingencies for infeasibility. We define a \emph{reliable} classifier as one that never misclassifies an infeasible contingency as feasible -- that is, one that makes no false negative predictions -- and we show that ICNN reliability can be certified by solving a collection of convex optimization problems (Proposition~\ref{prop:cert}). Furthermore, we show that an unreliable ICNN can be transformed into a reliable one with zero false negative rate by suitably scaling model parameters by the solution to a convex optimization problem (Theorem~\ref{theorem:scaling}), and we propose a training methodology that enables learning over the restricted set of \emph{provably reliable ICNNs} by applying this scaling during training via a differentiable convex optimization layer (Theorem~\ref{theorem:scaling_2}, Algorithm~\ref{alg:training}). This fully differentiable approach ensures that the scaling procedure and its dependence on model parameters are accounted for during gradient descent updates; see Figure~\ref{fig:schematic} for a diagram outlining this approach.

\begin{figure}
    \centering
    \includegraphics[width=\linewidth]{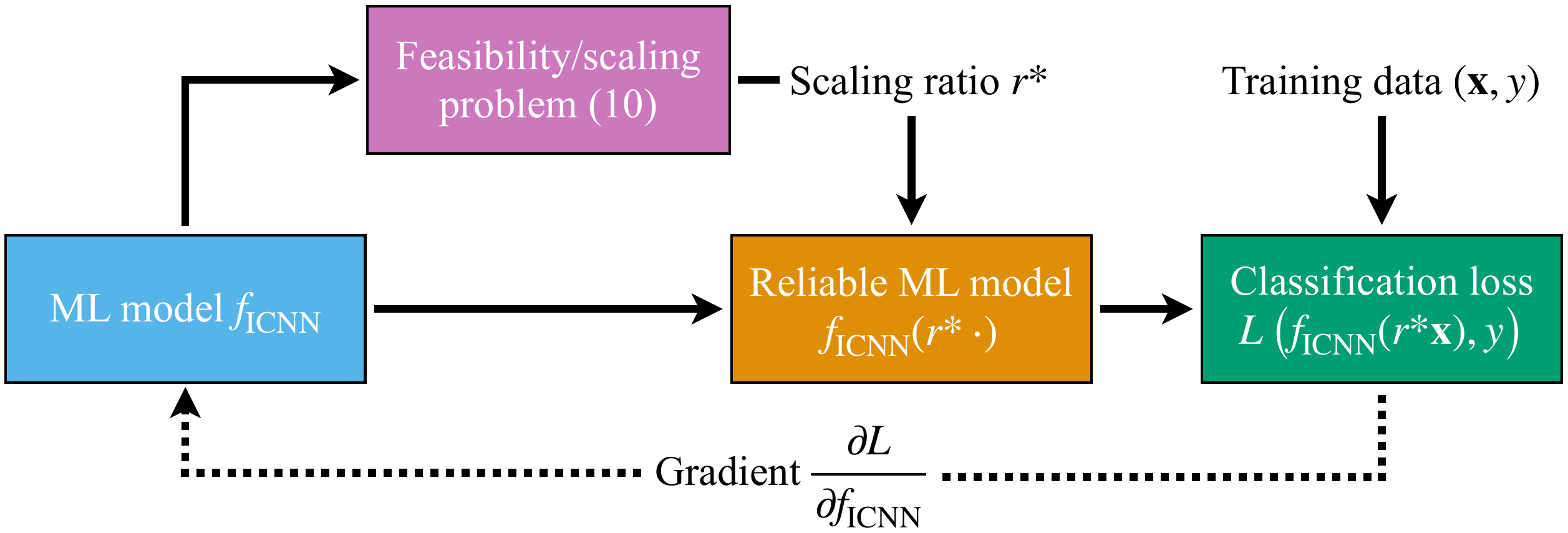}
    \caption{A schematic of our proposed methodology for training reliable classifiers for contingency screening in power systems; see Algorithm~\ref{alg:training} for a full description. Note that the scaling ratio $r^*$ is computed using a differentiable convex optimization layer, so the gradient $\partial L/\partial f_\icnn$ is aware of this scaling step.}
    \label{fig:schematic}
\end{figure}

Our approach allows for trading off the online computational burden of contingency screening for an offline one: it requires a significant computational investment during the training procedure to guarantee model reliability, but at deployment time, screening for contingencies only requires a single feedforward pass of the ICNN. We test our approach in a case study on the IEEE 39-bus test system, finding that it yields significant ($10\text{-}20\,\times$) speedups in runtime while ensuring zero false negative rate and excellent ($2\text{-}5\%$) false positive rate (Section~\ref{section:experiments_screening}). In addition, our approach yields an ICNN parametrizing an inner approximation to the set of network injections that are feasible across contingencies, which enables $10\,\times$ faster preventive dispatch via security-constrained optimal power flow (Section~\ref{section:experiments_scopf}). We anticipate that our proposed approach to learning efficient data-driven inner approximations to complex feasible sets using ICNNs could be of broader interest for other applications in energy systems and control.

\subsection{Related Work}
Our work contributes to four areas in the power systems and machine learning literature.

\noindent\textbf{Power system contingency analysis.} The problem of assessing the feasibility of contingencies has been studied in the power systems community for decades as a foundational part of secure grid operation \cite{enns_fast_1982}. Much work in recent years has sought to develop faster methods for contingency analysis, including exact methods that don't sacrifice reliability \cite{jiachun_guo_direct_2009,kaplunovich_fast_2013,kaplunovich_fast_2016} as well as heuristic and machine learning approaches that achieve faster speeds at the expense of accuracy \cite{davis_multiple_2011, crozier_data-driven_2022,nakiganda_topology-aware_2023}. Closest to our work is that of \cite{yang_fast_2017}, which proposes an approach using ``representative constraints'' to reduce the number of contingencies that must be considered; these representative constraints constitute an inner approximation of the set of all injections that are feasible across contingencies, just as our approach yields an ICNN-parametrized inner approximation to this set. In contrast to all prior approaches, our approach is both data-driven -- using ICNN models, which have substantial representational efficiency \cite[Theorem 2]{chen_optimal_2019}, to learn from system data -- and ensures rigorous guarantees on reliability, enabling fast and accurate contingency screening without any false negative predictions.

\noindent\textbf{Convex inner approximations in power systems.} The design of tractable, convex inner approximations to complicated convex or nonconvex sets is a widely studied problem in power systems and control, with applications to problems such as AC-optimal power flow \cite{dvijotham_construction_2015, nguyen_constructing_2019, lee_robust_2021} and aggregate flexibility of electric vehicles \cite{taha_efficient_2024}. When the set one wishes to approximate is convex, our approach could be adapted to enable learning such inner approximations in a data-driven manner, yielding greater efficiency and a better approximation due to the representational capacity of ICNNs.

\noindent\textbf{Machine learning in power systems.} Machine learning techniques have been applied to a wide range of problems in power systems, including contingency analysis \cite{crozier_data-driven_2022, nakiganda_topology-aware_2023}, optimal power flow \cite{zhou_deepopf-ft_2023, zhang_efficient_2023}, and security-constrained optimal power flow \cite{donti_adversarially_2021, dawson_adversarial_2023}. Of particular note in this direction are the papers \cite{chen_data-driven_2020,zhang_convex_2022,rosemberg_learning_2024}, which specifically apply \emph{ICNNs} to the problems of voltage control and optimal power flow. While some of the works applying machine learning to optimal power flow obtain generalization guarantees or provable constraint satisfaction for their methods, these guarantees hold specifically for the dispatch problem and cannot be extended to yield faster reliable approaches for contingency analysis. Thus, we give the first machine learning approach to contingency analysis with \emph{provable} guarantees on model accuracy.

\noindent\textbf{Robust and reliable machine learning.} A number of approaches have been developed to train machine learning models that are reliable in some sense, including methods to control the false positive/negative rates of a classifier \cite{elkan_foundations_2001, viaene_cost-sensitive_2005} and neural network verification and transformation techniques \cite{dvijotham_dual_2018,anderson_tightened_2020, ul_abdeen_learning_2022}. Recently, the field of \emph{learning-augmented} algorithms \cite{purohit_improving_2018,lykouris_competitive_2018} has developed new approaches to incorporate untrusted or ``black-box'' machine learning predictions into decision-making problems, including a number of energy-related problems \cite{lee_online_2021,christianson_robustifying_2022,li_learning-augmented_2023,lechowicz_online_2024}. In contrast to these prior approaches, our methodology enables learning data-driven ICNN models for contingency classification that are \emph{reliable by design}, with zero false negative rate enforced during training via a differentiable convex optimization layer.

\subsection{Notation}
Let $\R_+$ denote the nonnegative reals. Given a vector $\xvec \in \R^n$, we denote its $i$th entry $x_i$; similarly, given a matrix $\mat{M} \in \R^{m \times n}$, its $i$th row is denoted $\mvec_i$ and its $ij$th entry is denoted $m_{ij}$. Given $n \in \N$, we define $[n] = \{1, \ldots, n\}$, and given a set $\calX$, we define $\calP(\calX)$ as its power set. Given a set $\calA \subseteq \R^n$, $\inter \calA$ denotes its interior and $\vol(\calA)$ denotes its volume.

\section{Model and Preliminaries}
We begin by reviewing power network economic dispatch via the DC-optimal power flow problem and the problem of screening for infeasible contingencies. We then describe our classification approach to contingency screening and introduce the model of \emph{input-convex} neural networks we employ to this end.

\subsection{DC-OPF and Contingency Screening \label{section:model}}

Consider a power network with topology represented by a graph $G = (V, E)$, where $V$ is the set of nodes/buses and $E$ is the set of edges/transmission lines. Let $n = |V|$ be the number of buses and $m = |E|$ be the number of lines. Without loss of generality, we will assume that each bus $i \in [n]$ has a single generator.

To dispatch generation while minimizing cost and satisfying demand and other constraints in large-scale transmission networks, system operators typically solve the DC-optimal power flow (OPF) problem, which considers a linearized model of power flow \cite{stott_dc_2009}. In this problem, the system operator is faced with a known vector $\dvec \in \R^n$ of (net) demands across buses, and in response must choose generator dispatches $\pvec \in \R^n$ to minimize cost while satisfying several operational constraints:
\begin{subequations} \label{opt:economic_dispatch}
\begin{align}
    \min_{\pvec \in \R^n} &\quad \sum_{i \in [n]} c_i(p_i) \\
    \mathrm{s.t.}&\quad \underline{\pvec} \leq \pvec \leq \overline{\pvec} \label{opt:gen_limits}\\
    &\quad \bv{1}^\top (\pvec - \dvec) = 0 \label{opt:supply_demand_balance}\\
    &\quad \underline{\fvec} \leq \mat{H} (\pvec - \dvec) \leq \overline{\fvec} \label{opt:network_constraints}
\end{align}
\end{subequations}
Here, $c_i(p_i)$ is the cost for the generation decision $p_i$ on generator $i$, the constraint \eqref{opt:gen_limits} enforces lower and upper capacity limits $\underline{\pvec}, \overline{\pvec} \in \R^n$ on generation, 
\eqref{opt:supply_demand_balance} enforces supply-demand balance, and \eqref{opt:network_constraints} enforces the lower and upper bounds $\underline{\fvec}, \overline{\fvec} \in \R^m$ on line power flows given the nodal net injection vector $\pvec - \dvec$. The matrix $\mat{H}$ mapping from nodal net power injections to line power flows is specifically defined as $\mat{H} \coloneqq \mat{B}\mat{C}^\top \mat{L}^\dagger$, where $\mat{B} \in \R^{m \times m}$ is the diagonal matrix of line admittances, $\mat{C} \in \R^{n \times m}$ is a bus-by-line directed incidence matrix with entries defined as
$$c_{jl} = \begin{cases} +1 & \text{if line $l = j \to k$ for some $k \in V$}  \\ -1 & \text{if line $l = i \to j$ for some $i \in V$} \\ 0 & \text{otherwise,}\end{cases}$$
for some arbitrary orientation on the lines $E$, and $\mat{L} = \mat{C}\mat{B}\mat{C}^\top$ is the admittance-weighted network Laplacian. 

In the DC-OPF problem \eqref{opt:economic_dispatch}, the system operator solves for a feasible dispatch vector given a nominal network topology $G$. In practice, however, after a dispatch decision is chosen and the net nodal power injections $\xvec \coloneqq \pvec - \dvec$ are fixed, the network topology might change due to the failure of one or more lines. As a result of this contingency, the matrix $\mat{H}$ mapping net power injections to line power flows will change, causing the line flows to redistribute and potentially violate the line flow limits \eqref{opt:network_constraints}. Such violations may cause further lines to trip, causing a cascade of failures \cite[Chapter 4]{bienstock_electrical_2015}. Thus, to ensure continued feasible and reliable operation, the system operator must determine which contingencies are infeasible and must be planned for. This is the \emph{contingency analysis} problem, which is defined formally as follows.\footnote{Given a change in network topology resulting from a contingency, infeasibility could arise in either the line flow limits \eqref{opt:network_constraints} or the supply-demand balance constraint \eqref{opt:supply_demand_balance}; the latter is possible only in the case of \emph{islanding} contingencies which disconnect the network into multiple connected components. Because the set of islanding contingencies can be determined in advance, in this work we will restrict our focus only to the set of non-islanding contingencies and the feasibility of the line limits \eqref{opt:network_constraints}.}

\begin{problem}[\textbf{Contingency Analysis}]
    Let $\calC \subseteq \calP([m])$ be a set of contingencies of interest, where each $c \in \calC$ represents a set of failed lines, and let $\xvec = \pvec - \dvec \in \R^n$ be a vector of nodal net power injections. 
    In the \textbf{contingency analysis} problem, the system operator seeks to determine whether the net injection $\xvec$ yields feasible line flows for each contingency $c \in \calC$ -- that is, whether
    $$\underline{\fvec} \leq \mat{H}_c \xvec \leq \overline{\fvec}$$
    for each $c \in \calC$, where $\mat{H}_c = \mat{B}_c\mat{C}_c^\top \mat{L}_c^\dagger$ is defined for the post-contingency network topology with lines $E \setminus c$.
\end{problem}

A standard choice for the set of reference contingencies $\calC$ is the collection of all ${N - k}$ contingencies, or the set of all possible simultaneous failures of up to $k$ lines; in this case,
$$\calC = \{c \subseteq [m] : 1 \leq |c| \leq k\}.$$
In practice, however, it is impractical to check the feasibility of all possible ${N - k}$ contingencies for even moderately small $k$: in a network with $m$ lines, there are $\Omega(m^k)$ such possible contingencies, and so the complexity of ${N - k}$ contingency analysis grows exponentially with $k$. Instead, system operators typically only consider the $N - 1$ case, augmented with a small number of representative or problematic higher-order contingencies selected via heuristic methods. Such heuristics work well most of the time, since typically only a small number of contingencies are likely either to occur or to cause system infeasibility. However, they give no guarantees on system (in)feasibility for the broader set of possible $N - k$ contingencies for $k > 1$.

In this work, we seek to develop methods that can efficiently check whether a net injection $\xvec$ is feasible for \emph{all} contingencies in some general, large reference set $\calC$, such as the set of all $N - k$ contingencies for $k > 1$. To this end, we introduce the \emph{contingency screening} problem as a coarse-grained version of the contingency analysis problem.

\begin{problem}[\textbf{Contingency Screening}] \label{problem:batch_screening}
    Let $\calC \subseteq \calP([m])$ be a set of contingencies of interest, and let $\xvec \in \R^n$ be a vector of nodal net power injections. In the \textbf{contingency screening} problem, the system operator seeks to determine whether the net injection $\xvec$ is feasible for \textbf{all} contingencies $c \in \calC$ -- that is, whether $\xvec$ is in the \textbf{feasible region} 
    % $\calF_\calC$ defined as
    \begin{equation} \label{eq:tfr}
        \calF_\calC \coloneqq \left\{\yvec \in \R^n : \underline{\fvec} \leq \mat{H}_c \yvec \leq \overline{\fvec} \quad\forall c \in \calC\right\},
    \end{equation}
    where each $\mat{H}_c = \mat{B}_c\mat{C}_c^\top \mat{L}_c^\dagger$ is defined for the post-contingency network topology with lines $E \setminus c$. 
\end{problem}

The (true) feasible region $\calF_\calC$ defined above is the set of all net injections which remain feasible under any contingency in the set $\calC$. For the sake of notational convenience, in the rest of the paper we write this set abstractly as
\begin{equation} \label{eq:tfr_abstract}
    \calF_\calC \coloneqq \left\{\yvec \in \R^n : \mat{A} \yvec \leq \bvec\right\},
\end{equation}
where $\mat{A} \in \R^{2m|\calC| \times n}$ and $\bvec \in \R^{2m|\calC|}$ collect all of the constraints $\underline{\fvec} \leq \mat{H}_c \yvec \leq \overline{\fvec}$ in \eqref{eq:tfr}. We will assume that $\mat{A}$ contains no zero rows, since these would encode vacuous constraints. We will also make the following mild assumptions on the structure of $\calF_\calC$.

\begin{assumption} \label{assum:feasible_region}
    $\calF_\calC$ is a strict subset of $\,\R^n$ whose interior contains the origin: $\calF_\calC \subsetneq \R^n$ and $\vec{0} \in \inter \calF_\calC$. Equivalently, $\mat{A}$ has at least one row and $\mat{A}\vec{0} < \bvec$.
\end{assumption}

\noindent Note that these assumptions are reasonable ones: the first simply means that $\calF_\calC$ encodes \emph{some} constraint; if it doesn't, then we have no need to perform contingency screening. The second assumption amounts to the condition that the lower and upper line limits $\underline{\fvec}, \overline{\fvec}$ are bounded away from zero, which should hold in practice. 

The contingency \emph{screening} problem differs from the problem of contingency \emph{analysis} in that it focuses on feasibility across the entire reference set of contingencies $\calC$, rather than the feasibility of individual contingencies. We can frame this as a binary classification problem where one seeks to classify a net injection vector $\xvec \in \R^n$ as feasible or infeasible, and true labels are given by the indicator function $f_\calC$ defined as
$$f_\calC(\xvec) = \begin{cases}0 & \text{if $\xvec \in \calF_\calC$ \,(feasible)} \\ 1 & \text{otherwise (infeasible).} \end{cases}$$

While at first glance this might appear to be a simpler problem than the full contingency analysis problem, determining whether some injection $\xvec \in \calF_\calC$ (i.e., computing the label $f_\calC(\xvec)$) still has complexity $\Omega(m|\calC|)$ in general, as it requires checking the feasibility of each contingency in $\calC$. If approximations suffice, we could instead use techniques from machine learning to learn a more computationally efficient approximation of the function $f_\calC$ in a data-driven fashion using, e.g., neural networks; however, this computational speedup will typically come at the expense of reduced classification accuracy. In particular, a generic machine learning classifier might suffer \emph{false negatives}, where it classifies injections as feasible when they are not. While false positives (misclassifying a feasible injection as infeasible) may simply cause increased caution, false negatives pose a serious threat to reliable power system operation, since an infeasible injection that is not identified as such could lead to a cascade of failures. 

While the machine learning literature has developed a number of techniques to reduce the incidence of false negatives in classification, such as increasing the loss weight of examples in the positive class, none of these techniques can yield \emph{provably} guaranteed control over the false negative rate. To confidently deploy machine learning methods to contingency screening, they should ideally avoid any false negative predictions; we call such a classifier \emph{reliable}.

\begin{definition} \label{defn:reliable}
    A classifier $f : \R^n \to \{0, 1\}$ for the contingency screening problem (Problem~\ref{problem:batch_screening}) is said to be \textbf{reliable} if it has zero false negative rate, i.e., if
    $$f(\xvec) = 0 \,\text{ implies }\, \xvec \in \calF_\calC$$ 
    for any $\xvec \in \R^n$. 
\end{definition}

Note that a reliable classifier $f$ is exactly one whose \emph{predicted feasible region} $\{\xvec \in \R^n : f(\xvec) = 0\}$ is contained inside the true feasible region $\calF_\calC$; that is, the predicted feasible region should be an inner approximation of the true feasible region. Our goal in this work is to develop an approach for training reliable ML classifiers for contingency screening that satisfy this property. For general machine learning models and classification problems, determining whether this containment property holds is not typically tractable. However, as we will see in Section~\ref{section:methods}, the convex polyhedral structure of the true feasible region $\calF_\calC$ enables tractably answering this question when we restrict to a class of \emph{convex} neural networks.

\subsection{Input-Convex Neural Networks}

\emph{Input-convex neural networks (ICNNs)} \cite{amos_input_2017} are a restricted class of neural networks that parametrize convex functions. 
We consider feed-forward ICNNs $f_{\icnn} : \xvec \mapsto \yvec$ with $k$ hidden layers of the form
% \begin{subequations} 
\begin{align*}
    \zvec_1 &= \relu\left(\mat{D}_1\xvec + \bvec_1\right) \\
    \zvec_{i} &= \relu\left(\mat{W}_{i-1}\zvec_{i - 1} + \mat{D}_i \xvec + \bvec_i\right) &\text{for $i = 2, \ldots, k$}  \tageq\label{eq:ICNN}\\
    \yvec &= \mat{W}_k\zvec_k + \mat{D}_{k+1} \xvec + \bvec_{k+1},
\end{align*}
% \end{subequations}
where $\zvec_i$ is the $i$th hidden layer, the intermediate activation function is $\relu(x) = \max\{x, 0\}$, and the the weight matrices $\mat{W}_i$ are all assumed to have nonnegative entries (the weights $\mat{D}_i$ can have arbitrary entries). It is relatively straightforward to see that, under these assumptions (and more generally in the case of convex, nondecreasing activation functions), $f_\icnn(\xvec)$ is convex in $\xvec$ \cite[Proposition 1]{amos_input_2017}. Moreover, given sufficient depth and width, ICNNs can approximate \emph{any} convex function arbitrarily well \cite[Theorem 1]{chen_optimal_2019}. 

In the remainder of this work, for our application to the contingency screening problem, we will consider ICNNs with input dimension $n$ and output dimension $1$. Note that the lack of an output activation function means that the ICNN's output could be unboundedly large or small; when using an ICNN to classify the feasibility of an injection $\xvec$, we will take its prediction to be $\sigma(f_\icnn(\xvec))$, where $\sigma(x) = (1 + e^{-x})^{-1}$ is a sigmoidal activation applied to the output of the ICNN. In this case, predictions less than $0.5$ will correspond to a ``feasible'' classification (0), and those strictly greater than $0.5$ will correspond to ``infeasible'' (1). With this setup, one readily observes that the predicted feasible region of an ICNN is exactly its $0$-sublevel set:
$$\{\xvec \in \R^n : \sigma(f_\icnn(\xvec)) \leq 0.5\} = \{\xvec \in \R^n : f_\icnn(\xvec) \leq 0\}.$$
Note that the universal convex function approximation property enjoyed by ICNNs implies that \emph{any} convex set can be approximated arbitrarily well by the $0$-sublevel set of an ICNN. Thus, ICNNs are well-matched to the task of approximating the true feasible region $\calF_\calC$ for contingency screening, which is itself a convex set.

Following Definition~\ref{defn:reliable}, a reliable ICNN classifier is one whose predicted feasible region is contained inside the true feasible region $\calF_\calC$. In the next section, we will discuss how the convex structure of an ICNN enables both (a) tractably determining whether this containment property holds and (b) scaling an ICNN's parameters to guarantee its reliability.

\section{Certifying and Enforcing Reliability for ICNN Contingency Classifiers \label{section:methods}}

As discussed in Section~\ref{section:model}, a \emph{reliable} classifier for the contingency screening problem is one that makes no false negative predictions, i.e., whose predicted feasible region is fully contained inside the true feasible region $\calF_\calC$ \eqref{eq:tfr}. For an ICNN classifier $f_\icnn$, this amounts to the property that its $0$-sublevel set is contained in $\calF_\calC$. An immediate question that arises is whether it is possible to certify that a given classifier $f_\icnn$ satisfies this reliability criterion. Conveniently, we can show that certifying this property reduces to solving a collection of convex optimization problems. 

\begin{proposition} \label{prop:cert}
    An ICNN classifier for the contingency screening problem is reliable -- i.e., has zero false negative rate -- if and only if
    \begin{equation} \label{eq:cert_opt}
        \left\{\begin{aligned}
            \max_{\xvec \in \R^n} &\quad \avec_j^\top \xvec \\
            \mathrm{s.t.}&\quad f_\icnn(\xvec) \leq 0
        \end{aligned}\right\} \leq b_j
    \end{equation}
    for all $j \in \left[2m|\calC|\right]$, where $\avec_j$ is the $j$th row of $\mat{A}$.
\end{proposition}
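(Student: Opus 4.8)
The plan is to reduce the proposition to essentially pure definition-chasing: I would translate reliability into a sublevel-set containment, split that containment across the halfspaces defining $\calF_\calC$, and recognize each resulting scalar inequality as exactly the optimization problem \eqref{eq:cert_opt}.

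To begin, by Definition~\ref{defn:reliable} (and the remark immediately preceding the statement) an ICNN classifier is reliable exactly when its predicted feasible region, the $0$-sublevel set $S \coloneqq \{\xvec \in \R^n : f_\icnn(\xvec) \leq 0\}$, satisfies $S \subseteq \calF_\calC$; I would take this as the quantity to characterize. Using the halfspace description $\calF_\calC = \{\yvec \in \R^n : \mat{A}\yvec \leq \bvec\} = \bigcap_{j \in [2m|\calC|]} \{\yvec : \avec_j^\top \yvec \leq b_j\}$ from \eqref{eq:tfr_abstract}, containment in the intersection is equivalent to containment in each halfspace separately: $S \subseteq \calF_\calC$ if and only if $\avec_j^\top \xvec \leq b_j$ for every $j$ and every $\xvec \in S$.

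The next step is to note that the inner condition ``$\avec_j^\top \xvec \leq b_j$ for all $\xvec \in S$'' holds if and only if $\sup_{\xvec \in S} \avec_j^\top \xvec \leq b_j$, i.e., if and only if the optimal value of maximizing the linear functional $\avec_j^\top \xvec$ subject to $f_\icnn(\xvec) \leq 0$ is at most $b_j$. This is precisely \eqref{eq:cert_opt}, and quantifying over $j \in [2m|\calC|]$ yields the claimed equivalence. I would also record that each such program is a genuine convex program: the objective is linear and, since $f_\icnn$ is convex in $\xvec$, the constraint set $S$ is convex (and closed).

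The only point requiring care -- and the closest thing to an obstacle in an otherwise routine argument -- is the meaning of the ``$\max$'' in \eqref{eq:cert_opt}, since a linear functional over a convex but possibly unbounded set $S$ need not attain its supremum (and $S$ could even be empty). I would resolve this by reading \eqref{eq:cert_opt} as the optimal value (supremum) of the program, with the usual conventions: if the value is $+\infty$ then $S$ leaves the $j$th halfspace and reliability correctly fails, whereas if $S = \emptyset$ the value is $-\infty$, the classifier is vacuously reliable, and the inequality holds. Assumption~\ref{assum:feasible_region} ensures each $b_j$ is finite (indeed $\mat{A}\vec{0} < \bvec$ forces $b_j > 0$), so the comparison against $b_j$ is always well posed. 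Under these conventions the chain of ``if and only if''s is exact in both directions, which establishes the proposition.
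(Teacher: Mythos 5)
Your proof is correct and follows essentially the same route as the paper's: both reduce reliability to containment of the $0$-sublevel set $\{\xvec : f_\icnn(\xvec) \leq 0\}$ in the polyhedron $\calF_\calC$, split the containment across the halfspaces $\avec_j^\top \xvec \leq b_j$, and identify each resulting condition with the optimal value of the convex program \eqref{eq:cert_opt}. Your additional care about reading the ``$\max$'' as a supremum and handling the unbounded and empty cases is a refinement the paper glosses over, but it does not change the argument's structure.
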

\begin{proof}
    We first observe that, since $f_\icnn$ is a convex function, the optimization problem in \eqref{eq:cert_opt} is a convex problem, and thus can be solved tractably. Given this convexity, the fact that containment of the $0$-sublevel set of $f_\icnn$ inside the polyhedron $\calF_\calC$ can be determined by solving a collection of convex optimization problems of the form \eqref{eq:cert_opt} is a standard result in convex optimization (see, e.g., \cite{eaves_optimal_1982}). For the sake of completeness, we briefly describe the proof here.

    For the forward direction, suppose that containment holds, i.e., $\{\xvec \in \R^n : f_\icnn(\xvec) \leq 0\} \subseteq \calF_\calC$. This means that $f_\icnn(\xvec) \leq 0$ implies $\mat{A}\xvec \leq \bvec$; thus any feasible solution $\xvec$ to the problem in \eqref{eq:cert_opt} will satisfy the inequality $\avec_j^\top \xvec \leq b_j$, and hence this inequality will hold at optimality.

    For the reverse direction, suppose that \eqref{eq:cert_opt} holds for all $j$. If there were some $\xvec \in \R^n$ which was not feasible ($\xvec \not\in \calF_\calC$) and yet was predicted feasible by the ICNN ($f_\icnn(\xvec) \leq 0$), this would imply the existence of some $j$ such that $\avec_j^\top \xvec > b_j$, yielding a contradiction.
\end{proof}

The previous proposition provides a way of tractably certifying whether a given ICNN classifier is reliable, but it does not give a means of transforming an unreliable classifier into a reliable one. Since reliability of a classifier is exactly containment of its predicted feasible set inside the true feasible set, a natural approach to enforcing reliability would be to transform the classifier to translate and scale its predicted feasible set into the interior of $\calF_\calC$. In general, the problem of scaling some convex set $\calA$ to be contained in another convex set $\calB$ can be tractably cast as a convex optimization problem in certain special cases, such as when both $\calA$ and $\calB$ are polyhedra in halfspace form (see the foundational work of Eaves and Freund~\cite{eaves_optimal_1982}). However, the set we are concerned with scaling is the $0$-sublevel set of an ICNN, which has not been considered in prior work and is considerably more complex given the multilayer nature and substantial representational efficiency of ICNNs \cite[Theorem 2]{chen_optimal_2019}.

Nonetheless, as we show in the following theorem, it is possible to perform such a scaling efficiently by solving a collection of convex optimization problems, yielding a reliable classifier.

\begin{theorem} \label{theorem:scaling}
    Let $r^*$ and $\vvec^*$ be the optimal solutions to the optimization problem 
    \begin{subequations} \label{opt:scaling}
        \begin{align}
            \min_{r \in \R_+, \vvec \in \R^n} &\quad r \label{opt:scaling_obj} \\
            \mathrm{s.t.} &\quad z_j^* \leq \avec_j^\top \vvec + b_j r \quad \forall j \in \left[2m|\calC|\right], \label{opt:scaling_constr}
        \end{align}
    \end{subequations}
    where
    \begin{equation} \label{opt:z_j}
        \begin{aligned}
            z_j^* \coloneqq \max_{\xvec \in \R^n}&\quad \avec_j^\top \xvec \\
            \mathrm{s.t.}&\quad f_\icnn(\xvec) \leq 0
        \end{aligned}
    \end{equation}
    for each $j \in \left[2m|\calC|\right]$. Then the transformed ICNN classifier $\hat{f}_\icnn$ defined as
    $$\hat{f}_\icnn(\xvec) \coloneqq f_\icnn(r^*\xvec + \vvec^*)$$
    has zero false negative rate. Moreover, \eqref{opt:scaling} has a feasible solution as long as 
    % the true feasible set $\calF_\calC$ has nonempty interior and 
    the original predicted feasible set $\{\xvec \in \R^n : f_\icnn(\xvec) \leq 0\}$ is bounded.
\end{theorem}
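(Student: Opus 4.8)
The plan is to recognize that the predicted feasible set of the scaled classifier is an affine image of the original one, and then to read off reliability directly from the feasibility constraints \eqref{opt:scaling_constr} using the definition of $z_j^*$. Write $S \coloneqq \{\xvec \in \R^n : f_\icnn(\xvec) \le 0\}$ for the original predicted feasible set. Since $\hat{f}_\icnn(\xvec) \le 0 \iff r^*\xvec + \vvec^* \in S$, the predicted feasible set of $\hat{f}_\icnn$ is $P \coloneqq \{\xvec : r^*\xvec + \vvec^* \in S\}$, and by Definition~\ref{defn:reliable} reliability is exactly the containment $P \subseteq \calF_\calC = \{\yvec : \mat{A}\yvec \le \bvec\}$. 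So it suffices to show $\avec_j^\top \xvec \le b_j$ for every $\xvec \in P$ and every $j$.

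First I would run the following chain of inequalities, which is the heart of the argument. Fix $\xvec \in P$, so $r^*\xvec + \vvec^* \in S$; then the definition of $z_j^*$ in \eqref{opt:z_j} as the maximum of $\avec_j^\top(\cdot)$ over $S$ gives $r^* \avec_j^\top \xvec + \avec_j^\top \vvec^* = \avec_j^\top(r^*\xvec + \vvec^*) \le z_j^*$. Feasibility of $(r^*, \vvec^*)$ for the constraint \eqref{opt:scaling_constr} gives $z_j^* \le \avec_j^\top \vvec^* + b_j r^*$; chaining these and cancelling the common term $\avec_j^\top \vvec^*$ yields $r^* \avec_j^\top \xvec \le b_j r^*$. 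Dividing by $r^* > 0$ gives $\avec_j^\top \xvec \le b_j$ for all $j$, hence $\xvec \in \calF_\calC$, establishing $P \subseteq \calF_\calC$. Equivalently, one may invoke Proposition~\ref{prop:cert} for the convex function $\hat{f}_\icnn$: the substitution $\yvec = r^*\xvec + \vvec^*$ shows $\max_{\xvec \in P}\avec_j^\top \xvec = \tfrac{1}{r^*}(z_j^* - \avec_j^\top \vvec^*)$, which \eqref{opt:scaling_constr} bounds by $b_j$.

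For the feasibility claim I would use that a bounded $S$ is compact — it is closed as the $0$-sublevel set of the continuous function $f_\icnn$ — so each $z_j^* = \max_{\xvec \in S}\avec_j^\top \xvec$ is finite (and if $S = \emptyset$ then $z_j^* = -\infty$ and every $(r,\vvec)$ is feasible). Setting $\vvec = \vec{0}$ reduces \eqref{opt:scaling_constr} to $z_j^* \le b_j r$; since Assumption~\ref{assum:feasible_region} gives $\mat{A}\vec{0} = \vec{0} < \bvec$, i.e. $b_j > 0$ for all $j$, the choice $r = \max\{0, \max_j z_j^*/b_j\}$ satisfies every constraint and exhibits a feasible point.

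The delicate step — and the one I expect to be the main obstacle — is the division by $r^*$, which requires $r^* > 0$. The degenerate case $r^* = 0$ makes $\hat{f}_\icnn \equiv f_\icnn(\vvec^*)$ constant, so $P$ is either empty (reliability vacuous) or all of $\R^n$ (reliability fails). To rule out the latter I would pick $\yvec_0 \in \inter S$ and note that $\avec_j^\top \yvec_0 < z_j^* \le \avec_j^\top \vvec^*$ strictly (using that $\mat{A}$ has no zero rows, so $\avec_j \ne \vec{0}$), whence $\dvec \coloneqq \vvec^* - \yvec_0 \ne \vec{0}$ satisfies $\avec_j^\top \dvec > 0$ for all $j$, contradicting boundedness of $\calF_\calC$ and forcing $r^* > 0$. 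The subtlety is that $\calF_\calC$ need not be bounded here — indeed $\mat{H}_c \vec{1} = \vec{0}$ for every contingency, so $\vec{1}$ is a recession direction — and in that regime $r^* = 0$ can genuinely occur; I would resolve this by working in the supply–demand balance hyperplane $\{\vec{1}^\top \xvec = 0\}$, on which $\calF_\calC$ is bounded and the preceding recession argument applies.
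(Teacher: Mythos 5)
Your proposal is correct in substance and, at its core, runs the same change of variables $\yvec = r^*\xvec + \vvec^*$ as the paper, but organized differently: the paper first poses a volume-maximization problem \eqref{opt:vol_scale_form} (maximize the volume of the transformed sublevel set subject to containment in $\calF_\calC$), argues via $\vol \propto r^{-n}$ that its objective reduces to $\min r$, and then converts the containment constraints into \eqref{opt:scaling_constr} through Proposition~\ref{prop:cert}; you instead verify containment pointwise, chaining $\avec_j^\top(r^*\xvec + \vvec^*) \le z_j^* \le \avec_j^\top \vvec^* + b_j r^*$ and dividing by $r^* > 0$. Your route is more elementary and proves exactly what the theorem asserts -- the paper's volume detour serves to motivate the choice of objective, not to establish zero false negative rate. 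Your feasibility argument ($b_j > 0$ from Assumption~\ref{assum:feasible_region}, set $\vvec = \vec{0}$ and $r = \max\{0, \max_j z_j^*/b_j\}$, finite by compactness of $S$) is likewise a cleaner LP-level version of the paper's geometric argument that a bounded $S$ can be translated and shrunk into an $\varepsilon$-neighborhood of the origin.

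Where you genuinely diverge -- and are more careful than the paper -- is the case $r^* = 0$. The paper dismisses it in one line (``$r = 0$ would only be feasible if $\calF_\calC$ were all of $\R^n$''), but that reasoning pertains to the problem \eqref{opt:vol_scale_form}, where $r = 0$ with $f_\icnn(\vvec) \le 0$ makes the inner maximum $+\infty$ and hence infeasible; in the LP \eqref{opt:scaling}, by contrast, $r = 0$ is feasible whenever some $\vvec$ satisfies $\avec_j^\top \vvec \ge z_j^*$ for all $j$ (e.g., $S$ a singleton $\{\xvec_0\}$ with $\vvec^* = \xvec_0$), in which case $\hat{f}_\icnn$ is constant and $\le 0$ everywhere and reliability fails. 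So the two problems are not equivalent at $r = 0$, and you are right that this is the delicate step. Two caveats on your patch, though: (a) it silently assumes $\inter S \neq \emptyset$ -- for degenerate $S$ no $\yvec_0 \in \inter S$ exists, and that is precisely the regime where $r^* = 0$ genuinely occurs, so a nondegeneracy hypothesis must be stated, not derived; and (b) your detour through boundedness of $\calF_\calC$ on the balance hyperplane is more machinery than needed: in the unpruned formulation \eqref{eq:tfr}, the rows of $\mat{A}$ come in $\pm$ pairs ($\mat{H}_c \xvec \le \overline{\fvec}$ and $-\mat{H}_c \xvec \le -\underline{\fvec}$), so no direction $\dvec$ can satisfy $\avec_j^\top \dvec > 0$ for every $j$, and your strict inequality at an interior point of $S$ already yields a contradiction without any boundedness of $\calF_\calC$. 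With the interior-point assumption made explicit, your argument closes an edge case that the paper's own proof glosses over.
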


Before proving Theorem~\ref{theorem:scaling}, we first make four brief comments. First, note that the boundedness assumption on the predicted feasible set $\{\xvec \in \R^n : f_\icnn(\xvec) \leq 0\}$ can be easily enforced by, e.g., adding an indicator function to $f_\icnn$ that is $0$ for all $\|\xvec\| \leq D$ and $+\infty$ otherwise, where $D$ is some large constant.

Second, note that the transformed classifier $\hat{f}_\icnn$ can be obtained from $f_\icnn$ (as defined in \eqref{eq:ICNN}) by multiplying the weights $\mat{D}_i$ by $r^*$ and adding $\mat{D}_i\vvec^*$ to the biases. Its predicted feasible set is a transformed version of $f_\icnn$'s, obtained by translating by $-\vvec^*$ and scaling down by a factor of $r^*$. As long as $r^*$ is not infinite -- that is, if \eqref{opt:scaling} is feasible -- then the predicted feasible set of $\hat{f}_\icnn$ will be nonempty (assuming that of $f_\icnn$ is nonempty). We thus seek to minimize $r$ so as to maximize the volume of $\hat{f}_\icnn$'s predicted feasible set, which will ensure reliability while minimizing the conservativeness of this classifier as an inner approximation of the true feasible set $\calF_\calC$. Note that the resulting classifier might still be relatively conservative and suffer poor prediction accuracy on the negative class, i.e., a large false positive rate; in Section~\ref{section:dcl} we will propose a methodology to reduce this conservativeness and enforce classifier reliability \emph{during training} by incorporating a version of the scaling problem \eqref{opt:scaling} into the training process as a differentiable layer.

Third, observe that computing $r^*$ and $\vvec^*$ requires solving a collection of $2m|\calC|$ optimization problems \eqref{opt:z_j} followed by a linear program \eqref{opt:scaling} with just as many constraints. One might question, thus, the benefit of our scaling approach over exhaustive checking of contingencies, which has a similar dependence on $|\calC|$ in its complexity. However, our approach has a substantial benefit: this scaling must only be performed \emph{once} to obtain a classifier that is provably reliable for \emph{any} net injection, and all subsequent feasibility predictions only require an efficient feedforward pass of the ICNN. In contrast, exhaustively checking contingencies must be done separately for \emph{every} net injection. Thus, our approach yields significantly improved efficiency at deployment time by moving the computational burden of ensuring reliability from the \emph{online}, real-time setting to an \emph{offline} preprocessing step. 

Finally, we note that it is possible to transform the problems \eqref{opt:scaling}, \eqref{opt:z_j} into a single linear program by taking the Lagrange dual of each maximization problem \eqref{opt:z_j} \cite{yeh_end--end_2024}, similar to the approach for polyhedra in \cite{eaves_optimal_1982}. However, our multi-problem formulation is more efficient, as it lends itself to a distributed solution approach where we solve each of the smaller, independent optimization problems \eqref{opt:z_j} in parallel before using their optimal solutions to solve the linear program \eqref{opt:scaling}.

We now present a proof of Theorem~\ref{theorem:scaling}.

\begin{proof}[Proof of Theorem~\ref{theorem:scaling}]
    Consider the optimization problem
    \begin{subequations} \label{opt:vol_scale_form}
    \begin{align}
        \max_{r \in \R_+, \vvec \in \R^n} &\,\, \vol\left(\{\xvec \in \R^n : f_\icnn(r\xvec + \vvec) \leq 0\}\right) \label{opt:vol_scale_form_obj}\\
        \mathrm{s.t.} &\,\, \left\{\begin{aligned}
            \max_{\xvec \in \R^n} &\,\, \avec_j^\top \xvec \\
            \mathrm{s.t.}&\,\, f_\icnn(r\xvec + \vvec) \leq 0
        \end{aligned}\right\} \leq b_j \quad \forall j \in \left[2m|\calC|\right] \label{opt:vol_scale_form_constr}
    \end{align}
    \end{subequations}
    where we seek to maximize the volume of the predicted feasible set of $f_\icnn$ (to minimize conservativeness) after scaling and translating it by $r$ and $\vvec$, subject to the constraint that this transformed set is contained in the true feasible set $\calF_\calC$. First, note that since $\calF_\calC$ has nonempty interior -- and specifically, $\vec{0} \in \inter \calF_\calC$ (Assumption~\ref{assum:feasible_region}) -- then if the original predicted feasible set $\{\xvec \in \R^n : f_\icnn(\xvec) \leq 0\}$ is bounded, then \eqref{opt:vol_scale_form} has a feasible solution. This is because there must be a $\varepsilon$-neighborhood about the origin that remains contained in $\calF_\calC$; thus, since the predicted feasible region of $f_\icnn$ is bounded, it is possible to choose a translation $\vvec$ and a sufficiently large (yet finite) $r$ to ensure the transformed predicted region is contained in this $\varepsilon$-neighborhood.

    Now, let us consider the objective \eqref{opt:vol_scale_form_obj} and the constraints \eqref{opt:vol_scale_form_constr} separately. We can assume that $r > 0$, since $r = 0$ would only be feasible if $\calF_\calC$ were all of $\R^n$, which violates Assumption~\ref{assum:feasible_region}. For the objective, observe that
    \begin{align*}
        &\vol\left(\{\xvec \in \R^n : f_\icnn(r\xvec + \vvec) \leq 0\}\right) \\
        =\ &\vol\left(\{r^{-1}(\yvec - \vvec) \in \R^n : f_\icnn(\yvec) \leq 0, \yvec \in \R^n\}\right) \\
        =\ &r^{-n}\cdot \vol\left(\{\yvec \in \R^n : f_\icnn(\yvec) \leq 0\}\right), \tageq\label{eq:reduced_volume_form}
    \end{align*}
    where the final equality follows from the fact that homogeneously scaling a body by $s$ in $n$ dimensions scales the volume by $s^n$, and translation has no impact on volume. Since the volume term in \eqref{eq:reduced_volume_form} is independent of the decision variables $r$ and $\vvec$, and maximizing $r^{-n}$ will yield the same optimal solution as minimizing $r$ (since the function $s \mapsto s^{-1/n}$ is strictly decreasing on $s > 0$), we can replace \eqref{opt:vol_scale_form_obj} with $\min_{r \in \R_+, \vvec \in \R^n} \, r$ while keeping the same optimal solution. This exactly matches the objective in \eqref{opt:scaling_obj}.

    Next, consider the constraints \eqref{opt:vol_scale_form_constr}. By Proposition~\ref{prop:cert}, these constraints enforce the reliability -- or zero false negative rate -- of the transformed classifier $f_\icnn(r\xvec + \vvec)$. For a given $j \in \left[2m|\calC|\right]$, since $r > 0$, we have
    \begin{align*}
        &\left\{\begin{aligned}
            \max_{\xvec \in \R^n} &\,\, \avec_j^\top \xvec \\
            \mathrm{s.t.}&\,\, f_\icnn(r\xvec + \vvec) \leq 0
        \end{aligned}\right\} \leq b_j \\
        \iff&\left\{\begin{aligned}
            \max_{\yvec \in \R^n} &\,\, \avec_j^\top r^{-1}(\yvec - \vvec) \\
            \mathrm{s.t.}&\,\, f_\icnn(\yvec) \leq 0
        \end{aligned}\right\} \leq b_j \\
        \iff&\left\{\begin{aligned}
            \max_{\yvec \in \R^n} &\,\, \avec_j^\top \yvec \\
            \mathrm{s.t.}&\,\, f_\icnn(\yvec) \leq 0
        \end{aligned}\right\} \leq \avec_j^\top\vvec + b_j r
    \end{align*}
    which exactly matches \eqref{opt:scaling_constr} and \eqref{opt:z_j}. 
\end{proof}

\section{Training Reliable ICNN Classifiers with Differentiable Convex Optimization Layers \label{section:dcl}}

Theorem~\ref{theorem:scaling} in the previous section provides an approach to scale the parameters of an existing ICNN classifier to guarantee provable reliability, or zero false negative rate. However, this post-hoc scaling process could yield significant conservativeness. This is because scaling down the predicted feasible region by a factor of $r > 1$ decreases its volume by a factor of $r^n$; under mild assumptions on the probability distribution over net injections $\xvec \in \R^n$ seen at deployment time, this scaling could beget an exponential increase in the false positive rate compared to the original, unreliable classifier. 

To avoid this conservativeness, it is necessary to incorporate this scaling procedure into the training of the ICNN classifier, rather than applying it only after training. A natural approach is as follows: at each epoch of training, first solve the problems \eqref{opt:scaling} and \eqref{opt:z_j} to determine the optimal scaling parameters $r^*$ and $\vvec^*$. Then, evaluate the training loss of the transformed ICNN classifier -- for a single injection/label pair $(\xvec, y)$, we denote this loss $L\left(f_\icnn(r^*\xvec + \vvec^*), y\right)$, where $L$ is some classification loss -- and update the model $f_\icnn$ using the gradient $\frac{\partial L}{\partial f_\icnn}$, where $\partial f_\icnn$ refers to the gradient with respect to all the parameters of $f_\icnn$. This approach aligns the training loss with the objective of learning the optimal reliable classifier, since the loss that is minimized through gradient descent is that of the reliable, scaled ``version'' of the generic classifier $f_\icnn$.

As currently described, however, this approach is incomplete. In particular, note that the scaling parameters $r^*, \vvec^*$ resulting from the problem \eqref{opt:scaling} themselves depend on the parameters of $f_\icnn$ through each $z_j^*$. Defining $\hat{y} \coloneqq f_\icnn(r^*\xvec + \vvec^*)$, by the chain rule, the gradient of $L(\hat{y}, y)$ with respect to the parameters of $f_\icnn$ is
\begin{align*}
    &\frac{\partial L}{\partial f_\icnn}(\hat{y}, y) \\
    &=\frac{\partial L}{\partial \hat{y}} \left(\frac{\partial \hat{y}}{\partial f_\icnn} + \frac{\partial \hat{y}}{\partial r^*}\sum_j\frac{\partial r^*}{\partial z_j^*} \frac{\partial z_j^*}{\partial f_\icnn} + \frac{\partial \hat{y}}{\partial \vvec^*}\sum_j\frac{\partial \vvec^*}{\partial z_j^*} \frac{\partial z_j^*}{\partial f_\icnn}\right).
\end{align*}
Thus to compute the gradient of the loss $L$ with respect to the parameters of the ICNN $f_\icnn$, it is necessary to also compute the gradients of the optimal solutions $r^*, \vvec^*$ of \eqref{opt:scaling} with respect to each $z_j^*$, and the gradient of each optimal value $z_j^*$ of \eqref{opt:z_j} with respect to $f_\icnn$'s parameters. To compute these gradients, we can employ differentiable convex optimization layers \cite{agrawal_differentiable_2019}, which automatically compute the gradient of a convex optimization problem with respect to problem parameters by differentiating through the Karush-Kuhn-Tucker (KKT) conditions of the problem, allowing the incorporation of such problems into machine learning training methodologies in a fully differentiable manner. By computing $r^*$ and $\vvec^*$ using differentiable layers, we ensure that the training process is ``aware'' of the scaling procedure that is applied to $f_\icnn$ to guarantee reliability.

While this fully differentiable approach ensures that the scaling procedure is accounted for when computing the loss gradient, it requires computing both the solution to \eqref{opt:scaling} and the solutions to \eqref{opt:z_j} for all $j \in \left[2m|\calC|\right]$ using differentiable layers, which typically require additional computational overhead beyond solving the relevant optimization problems in a non-differentiable manner \cite{agrawal_differentiable_2019}. Because we need to apply this scaling at each epoch of training to enforce reliability, reducing the number of differentiable optimization layers used at each step of training would improve computational efficiency.

Fortunately, as we show in the following theorem, it is possible to obtain a fully differentiable scaling procedure using just a \emph{single} differentiable optimization step.

\begin{theorem} \label{theorem:scaling_2}
    Let $z_j^*$ be defined as in \eqref{opt:z_j} for each $j \in \left[2m|\calC|\right]$, and let $j^* \coloneqq \argmax_j z_j^*/b_j$. Define $r^*$ to be the optimal value of the following problem:
    \begin{equation} \label{opt:scaling_2}
        \begin{aligned}
            r^* \coloneqq \max_{\xvec \in \R^n} &\quad \avec_{j^*}^\top \xvec / b_{j^*} \\
            \mathrm{s.t.}&\quad f_\icnn(\xvec) \leq 0.
        \end{aligned}
    \end{equation}
    Then the transformed ICNN classifier $\hat{f}_\icnn$ defined as
    $$\hat{f}_\icnn(\xvec) \coloneqq f_\icnn(r^*\xvec)$$
    has zero false negative rate. Moreover, \eqref{opt:scaling_2} has a feasible solution as long as the original predicted feasible set $\{\xvec \in \R^n : f_\icnn(\xvec) \leq 0\}$ is bounded.
\end{theorem}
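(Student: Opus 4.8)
The plan is to recognize that Theorem~\ref{theorem:scaling_2} is precisely the restriction of Theorem~\ref{theorem:scaling} to $\vvec = \vec{0}$, so that reliability again follows from Proposition~\ref{prop:cert} by a change of variables. First I would simplify $r^*$: since Assumption~\ref{assum:feasible_region} gives $\mat{A}\vec{0} < \bvec$ and hence $b_j > 0$ for every $j$, the positive constant $b_{j^*}$ factors out of the objective of \eqref{opt:scaling_2}, yielding $r^* = z_{j^*}^*/b_{j^*}$; by the choice $j^* = \argmax_j z_j^*/b_j$ this equals $\max_j z_j^*/b_j$. Equivalently, fixing $\vvec = \vec{0}$ in the linear program \eqref{opt:scaling} decouples the constraints \eqref{opt:scaling_constr} into $z_j^* \leq b_j r$, whose minimal feasible $r$ is exactly this worst-case ratio --- which is the whole point, as it lets $r^*$ be computed and differentiated through the single problem \eqref{opt:scaling_2} rather than the full LP.

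With $r^*$ in hand, I would verify reliability via Proposition~\ref{prop:cert} applied to $\hat{f}_\icnn(\xvec) = f_\icnn(r^*\xvec)$: reliability is equivalent to $\max\{\avec_j^\top \xvec : f_\icnn(r^*\xvec) \leq 0\} \leq b_j$ for all $j$. Substituting $\yvec = r^*\xvec$ (valid for $r^* > 0$) rewrites the left-hand side as $(r^*)^{-1}\max\{\avec_j^\top \yvec : f_\icnn(\yvec) \leq 0\} = z_j^*/r^*$, exactly as in the proof of Theorem~\ref{theorem:scaling}. Thus reliability is equivalent to $z_j^*/b_j \leq r^*$ for all $j$, which holds because $r^* = \max_j z_j^*/b_j$. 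Finiteness of $r^*$ follows as before: boundedness (and nonemptiness) of the predicted feasible set $\{\xvec : f_\icnn(\xvec) \leq 0\}$ ensures each linear maximization \eqref{opt:z_j} attains a finite value, so every $z_j^*$, and therefore $r^*$, is finite.

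The step requiring the most care is justifying $r^* > 0$, since the change of variables divides the reliability inequality by $r^*$ and must preserve its direction. I would argue this holds outside the degenerate case in which every $z_j^* \leq 0$ --- i.e., in which the predicted feasible set lies entirely within the halfspace intersection $\{\xvec : \mat{A}\xvec \leq \vec{0}\}$ --- since whenever some $z_j^* > 0$ we get $r^* \geq z_j^*/b_j > 0$; this mirrors the analogous $r > 0$ reduction invoked in the proof of Theorem~\ref{theorem:scaling}. Beyond this point the argument is essentially bookkeeping, as the theorem inherits its structure directly from the $\vvec = \vec{0}$ specialization of the more general scaling result.
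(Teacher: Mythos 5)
Your proposal is correct and follows essentially the same route as the paper: fix $\vvec = \vec{0}$ in the scaling program \eqref{opt:scaling}, use $b_j > 0$ (which follows from Assumption~\ref{assum:feasible_region}) to identify the optimal value of \eqref{opt:scaling_2} with $\max_j z_j^*/b_j$, and conclude that the scaled classifier inherits reliability from Theorem~\ref{theorem:scaling} --- your explicit re-verification via Proposition~\ref{prop:cert} and the change of variables $\yvec = r^*\xvec$ simply unpacks that inheritance. Your caveat that $r^* > 0$ is needed for the substitution is likewise implicit in the paper's own argument (which assumes $r > 0$ in the proof of Theorem~\ref{theorem:scaling}), so flagging the degenerate case where all $z_j^* \leq 0$ is a welcome clarification rather than a divergence.
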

\begin{proof}
    Consider the optimization problem \eqref{opt:scaling}, and fix $\vvec = \vec{0}$; this problem remains feasible, by the assumption that the predicted feasible set is bounded, and since $\vec{0} \in \inter \calF_\calC$ (Assumption~\ref{assum:feasible_region}) implies that $\bvec > \vec{0}$. The optimal solution $r^*$ to \eqref{opt:scaling} is the smallest value of $r$ that still satisfies the constraints \eqref{opt:scaling_constr}; this is exactly
    $$r^* \coloneqq \max_j z_j^* / b_j.$$
    It is straightforward to see that this $r^*$ is identical to the one obtained by \eqref{opt:scaling_2}. Thus, the scaling obtained from \eqref{opt:scaling_2} inherits the zero false negative rate property of \eqref{opt:scaling}.
\end{proof}

In Theorem~\ref{theorem:scaling_2}, the values $z_j^*$ only need to be computed in order to determine the maximizing index $j^*$; then, the scaling ratio $r^*$ is computed using just the single optimization problem \eqref{opt:scaling_2}. As such, all of the $z_j^*$ can be computed in a non-differentiable fashion, and only \eqref{opt:scaling_2} must be solved using a differentiable layer. Note additionally that the lack of a translation variable $\vvec$ in \eqref{opt:scaling_2} shouldn't yield any additional conservativeness during training, since during training the ICNN can learn biases that would imitate the impact of any such possible $\vvec$. 

We outline in Algorithm~\ref{alg:training} a training methodology incorporating the fast, differentiable scaling procedure in Theorem~\ref{theorem:scaling_2}. In this process, we begin by ``warm-starting'' the training for $M_w$ epochs by performing standard gradient descent on the classification loss without scaling for reliability. Then, for each of the remaining $M_s$ epochs, the model is scaled using a differentiable layer implementing \eqref{opt:scaling_2} before evaluating the training loss. Note that after every gradient step, the ICNN's weights $\mat{W}_i$ must be clipped to the positive orthant to maintain convexity.

\begin{algorithm}[t]
\SetAlgoLined
\caption{Training procedure for reliable ICNN classifiers}
\label{alg:training}

\KwInput{training data $\{(\xvec_i, y_i)\}_{i=1}^N$, initial ICNN $f_\icnn$, warm-start epochs $M_w$, scaling epochs $M_s$, batch size $s$}

\BlankLine
\tcc{Warm-start the ICNN training without scaling}
\SetKwFor{ForEach}{for each}{do}{end for}
\ForEach{epoch in $[M_w]$}{
    \ForEach{mini-batch $B \subset [N]$}{
        Evaluate the loss $\frac{1}{s}\sum_{i \in B} L\left(f_\icnn(\xvec_i), y_i\right)$ \\
        Compute the gradient $\frac{\partial \,\mathrm{loss}}{\partial f_\icnn}$ and use it to update $f_\icnn$
    }
}

\BlankLine
\tcc{Train with scaling to enforce reliability}
\ForEach{epoch in $[M_s]$}{
    Compute
    \begin{equation*}
    \begin{aligned}
        \hspace{-10em}z_j^* \coloneqq \max_{\xvec \in \R^n}&\quad \avec_j^\top \xvec \\
        \mathrm{s.t.}&\quad f_\icnn(\xvec) \leq 0
    \end{aligned}
    \end{equation*}
    for each $j \in \left[2m|\calC|\right]$ \\
    \nl Set $j^* \coloneqq \argmax_j z_j^* / b_j$ \\
    Compute 
    \begin{equation*}
    \begin{aligned}
        \hspace{-10em}r^* \coloneqq \max_{\xvec \in \R^n}&\quad \avec_{j^*}^\top \xvec / b_{j^*} \\
        \mathrm{s.t.}&\quad f_\icnn(\xvec) \leq 0
    \end{aligned}
    \end{equation*}
    using a differentiable convex optimization layer \label{algline:diff_opt}\\
    Evaluate the loss $\frac{1}{s}\sum_{i \in B} L\left(f_\icnn(r^*\xvec_i), y_i\right)$ of the scaled model on a mini-batch $B$ \\
    Compute the gradient $\frac{\partial \,\mathrm{loss}}{\partial f_\icnn}$ and use it to update $f_\icnn$
}
\end{algorithm}

\section{Experimental Results}
In this section, we describe the results of our ICNN training methodology (Algorithm~\ref{alg:training}) in a case study of $N - 2$ contingency screening on the IEEE 39-bus test network \cite{athay_practical_1979, pai_energy_1989}. All experiments were performed on a MacBook Pro with 12-core M3 Pro processor, and the code for implementing the experiments is available upon request. 

We used the IEEE 39-bus test network implemented in pandapower \cite{thurner_pandapoweropen-source_2018}. We generated 14,000 random demand vectors from a multivariate normal distribution centered at the nominal demand with relative standard deviation $15\%$ and random covariance. We assigned each generator a linear cost with random coefficient between 10 and 50, and set line limits uniformly to 1600 MW. We then solved the DC-OPF problem \eqref{opt:economic_dispatch} for each demand instance to obtain net injections, which were then standardized and split into a 10,000 sample training set, a 2,000 sample validation set, and a 2,000 sample test set. 

To construct the true feasible set $\calF_\calC$, we took the set of all $N - 2$ contingencies and dropped any islanding contingencies as well as contingencies that were infeasible more than 90\% of the time, since these should be handled separately. We eliminated any dimensions for which the generated injection data was constant and eliminated redundant constraints using the method from \cite[Theorem 2]{yang_fast_2017}, using as a bounding box the empirical dimension-wise minimum and maximum net injections, multiplied by 1.2 for buffer and extended to include the origin. This resulted in a constraint matrix $\mat{A}$ with 3,613 rows and 26 columns. To account for the standardized training data, we multiplied the rows of $\mat{A}$ by $\bm{\sigma}$ and subtracted $\mat{A}\bm{\mu}$ from $\bvec$, where $\bm{\mu}$ and $\bm{\sigma}$ are the dimension-wise mean and standard deviation of the unstandardized training data.

We trained both ICNNs and standard, nonconvex neural networks (NNs) for the contingency screening task using PyTorch \cite{paszke_pytorch_2019}. All networks had a hidden width of 50, we enforced boundedness of the predicted feasible set by adding a layer ensuring the output would always be positive outside of the aforementioned bounding box of net injections, and we trained models using hidden depths of 1, 2, and 3, as well as weights of $0.5$, $1$, and $1.5$ on the positive class of the binary cross-entropy loss to probe the impact of positive class weight on false negative rate. For each choice of parameters, we trained 3 models with independent seeds, and in our results we report the mean and standard deviation of performance over these seeds. We trained the ICNNs using 500 warm-start epochs and 9,500 scaling epochs, and the nonconvex NNs were trained using 10,000 standard epochs. The cvxpylayers library \cite{agrawal_differentiable_2019} was used to differentiably solve the optimization problem in line~\ref{algline:diff_opt} of the training methodology (Algorithm~\ref{alg:training}).
We used the Adam optimizer \cite{kingma_adam_2017} with learning rate $10^{-2}$, decreasing the learning rate by a factor of 10 at epoch 1,500 and again at 8,500. During each training run, we kept track of the false positive rate on the validation set at each epoch and selected as the training output the model with the best such validation set performance.

We show in Figure~\ref{fig:slice} a 2-dimensional slice of the true feasible region $\calF_\calC$ and the predicted feasible region of a 1-layer ICNN trained via our methodology. It is evident that the ICNN respects the inner approximation property as a result of the scaling procedure while learning to focus on the data-intensive region at the bottom of the true feasible region. The ICNN does not need to learn the shape of the entire true feasible region due to data sparsity at the top of this slice, enabling a more efficient representation. 

\begin{figure}[t]
    \centering
    \includegraphics{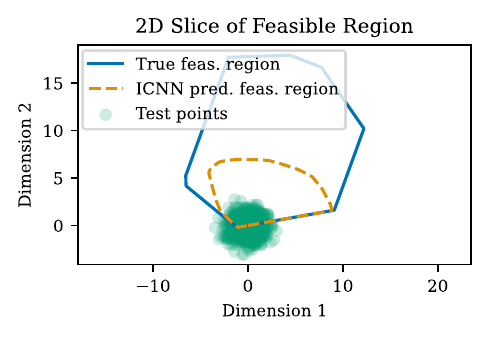}
    \caption{2-dimensional slice of the true feasible region and the predicted feasible region of a trained ICNN with hidden depth 1, with net injections from the test set overlaid.}
    \label{fig:slice}
\end{figure}

\subsection{Contingency Screening Results\label{section:experiments_screening}}

\begin{figure}
    \centering
    \includegraphics[width=\linewidth]{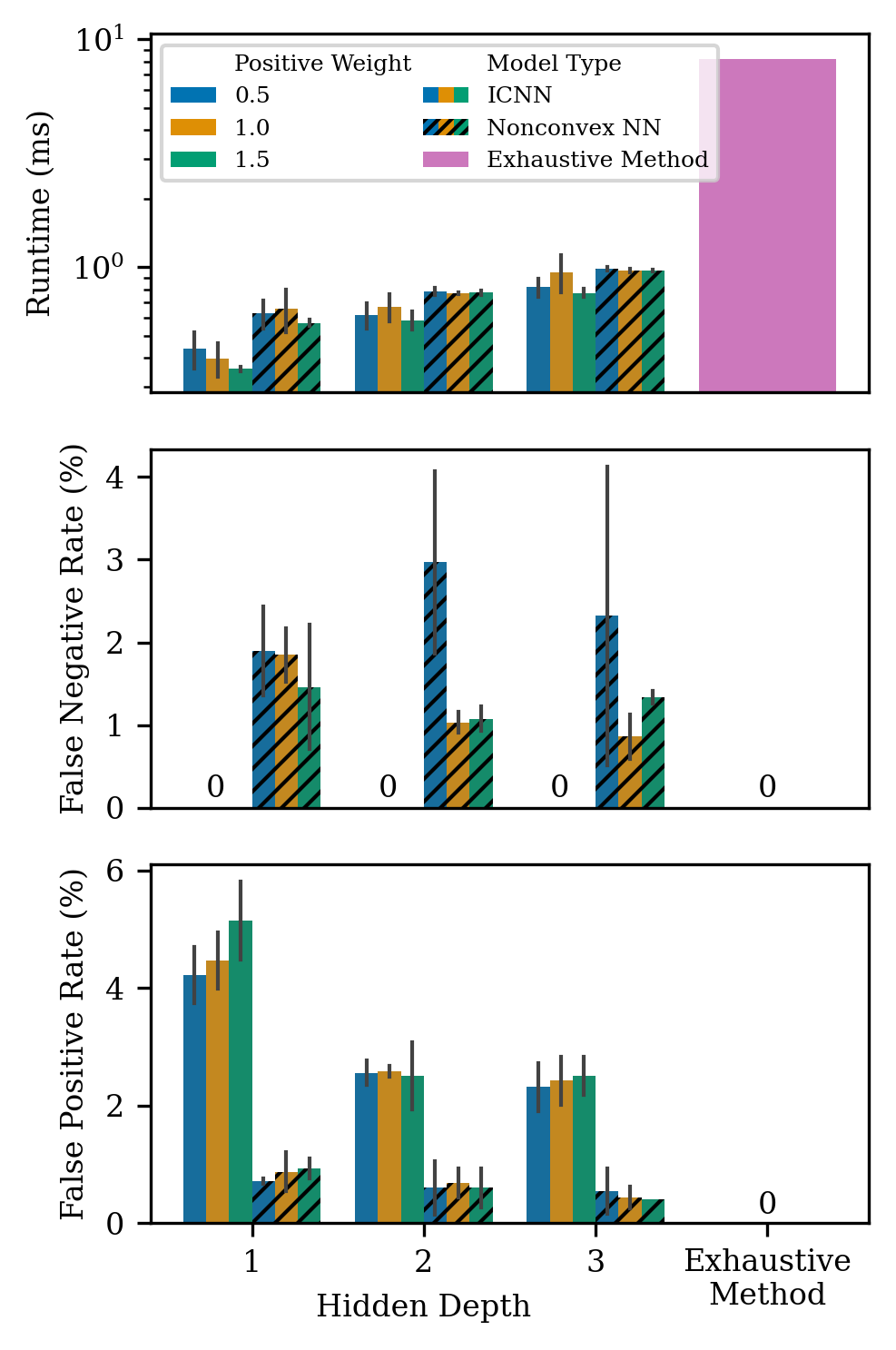}
    \caption{Results for our ICNN-based contingency analysis method, compared against a nonconvex neural network (NN) model and exhaustive checking of contingencies. (Top) Runtime to screen the feasibility of the 2,000 test injections. (Middle) False negative rate. (Bottom) False positive rate.}
    \label{fig:contingency_results}
\end{figure}

We show in Figure~\ref{fig:contingency_results} a comparison of the ICNNs trained via our methodology against standard NNs and the exhaustive method of checking all constraints individually for the contingency screening problem. Note that the ``Positive Weight'' value refers to the weight assigned to elements of the positive class in the training loss, where weights less than 1 typically encourage lower false positive rates, and weights greater than 1 typically encourage lower false negative rates.

Notably, the ICNNs trained with our differentiable scaling procedure in Algorithm~\ref{alg:training} achieve a speedup of $10\text{-}20\,\times$ over the exhaustive method, depending on the depth of the ICNN (Figure~\ref{fig:contingency_results}, top). Moreover, they uniformly achieve a false negative rate of 0 (Figure~\ref{fig:contingency_results}, middle), as guaranteed by our theoretical results, and a false positive rate between $2\%$ and $5\%$ (Figure~\ref{fig:contingency_results}, bottom). While the effect is not significant, it appears in the cases of hidden depth 1 and 3 that a lower positive weight may decrease the false positive rate of our approach. 

In comparison, the nonconvex NNs achieve a better false positive rate, ranging between 0.5\% and 1\%, but suffer significant false negative rates of $1\%$ to $3\%$, demonstrating that they cannot reliably be used for contingency screening, as they could misclassify infeasible scenarios as feasible. Our approach thus enables significantly faster screening than the exhaustive method while ensuring the reliability that cannot be guaranteed by standard neural networks.

\subsection{Faster Preventive Dispatch via SC-OPF \label{section:experiments_scopf}}

In practice, power system operators often want to perform \emph{preventive} dispatch to ensure that the chosen operating point will remain feasible in the case of contingencies. This problem, known as security-constrained (SC)-DC-OPF, adds to \eqref{opt:economic_dispatch} the additional constraint that $\xvec \coloneqq \pvec - \dvec$ should be feasible for all contingencies in the reference set $\calC$ -- that is, $\pvec - \dvec \in \calF_\calC$:
\begin{subequations} \label{opt:sc_opf}
\begin{align}
    \min_{\pvec \in \R^n} &\quad \sum_{i \in [n]} c_i(p_i) \\
    \mathrm{s.t.}&\quad \underline{\pvec} \leq \pvec \leq \overline{\pvec} \\
    &\quad \bv{1}^\top (\pvec - \dvec) = 0 \\
    &\quad \underline{\fvec} \leq \mat{H} (\pvec - \dvec) \leq \overline{\fvec}  \\
    &\quad \pvec - \dvec \in \calF_\calC \label{opt:security_constraint}
\end{align}
\end{subequations}

Because our ICNN approach to contingency screening yields an ICNN $f_\icnn(r^*\cdot)$ whose $0$-sublevel set is an inner approximation to $\calF_\calC$, one might naturally consider replacing the security constraint \eqref{opt:security_constraint} in the full SC-OPF problem with the conservative inner approximation $\hat{f}_\icnn\left(r^*(\pvec - \dvec)\right)\leq 0$ in an attempt to accelerate the solution time of this problem, since the original set $\calF_\calC$ is typically high-dimensional. We test the performance of this approach and its impact on system cost and infeasibility using our ICNN models trained on the IEEE 39-bus system, and we display the results in Figure~\ref{fig:scopf_results}.

\begin{figure}
    \centering
    \includegraphics[width=\linewidth]{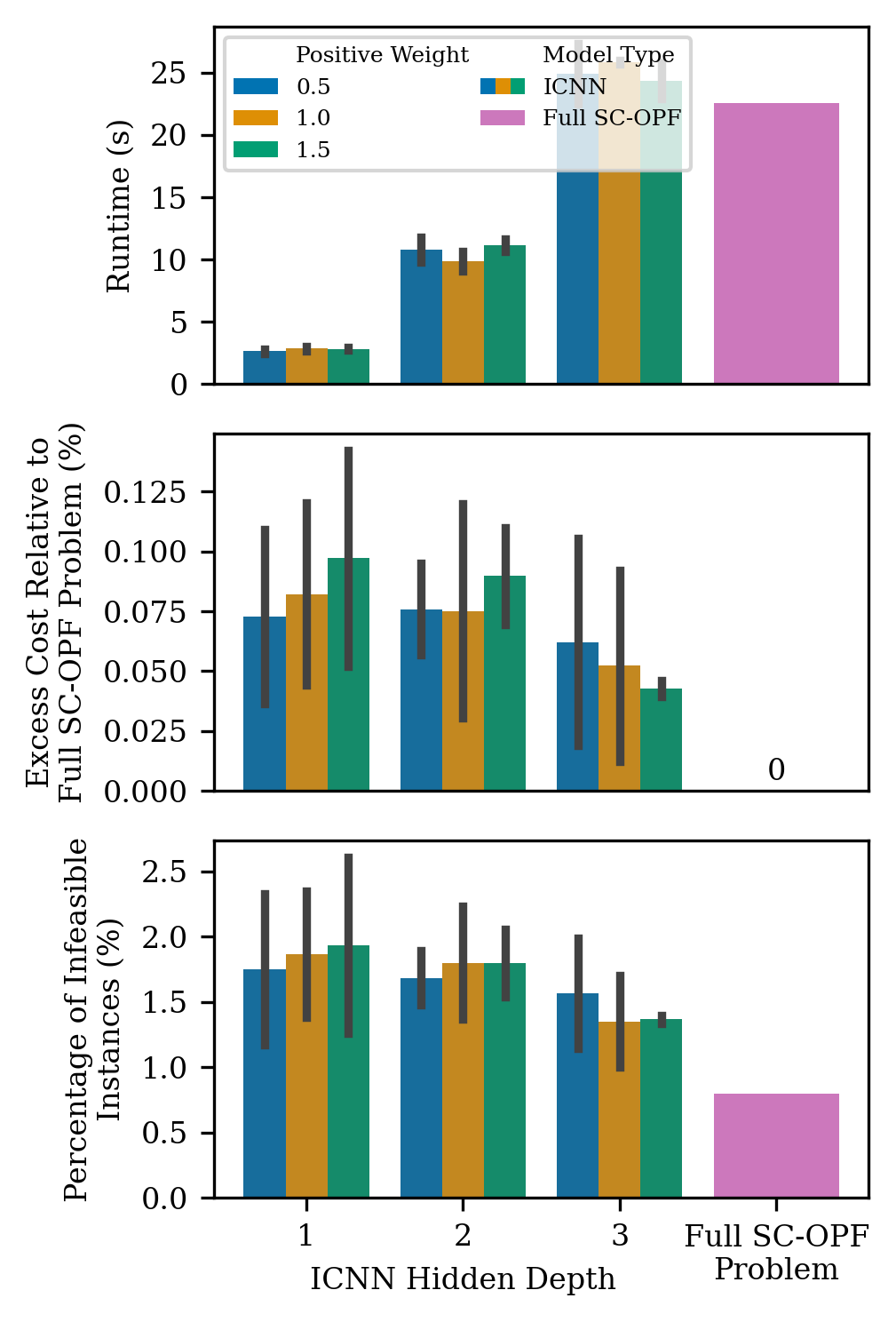}
    \caption{Results for the ICNN-based SC-OPF problem compared to the full SC-OPF problem \eqref{opt:sc_opf}. (Top) Runtime to solve the SC-OPF problem or ICNN version thereof on 2,000 test injections, disregarding infeasible injections. (Middle) Percent excess cost of the ICNN version of SC-OPF relative to the full SC-OPF problem \eqref{opt:sc_opf}. (Bottom) Percentage of infeasible demand instances for the ICNN version of SC-OPF compared against the full SC-OPF problem \eqref{opt:sc_opf}.}
    \label{fig:scopf_results}
\end{figure}

We see that, while the ICNNs with hidden depth 3 do not offer a speedup compared to solving \eqref{opt:sc_opf} exactly, the 2-layer ICNNs halve the runtime, and the shallowest 1-layer ICNNs speed up this problem by nearly a factor of 10 (Figure~\ref{fig:scopf_results}, top). Remarkably, they achieve this speedup while increasing the dispatch cost by no more than $0.1\%$ on average over the full SC-OPF problem (Figure~\ref{fig:scopf_results}, middle), and increasing the share of infeasible demand instances by only ${\sim}1\%$. It also appears that, for the ICNNs with hidden depth 1, decreasing the positive weight leads to better cost and less infeasibility. This agrees with intuition, since a lower positive weight encourages lower false positive rates, meaning that the ICNN should be a less conservative inner approximation to the set $\calF_\calC$. However, further study will be needed to determine whether this observation generalizes to deeper models, which in our experiments do not seem to exhibit this behavior. 

To conclude, note that we could modify our training methodology in Algorithm~\ref{alg:training} by replacing the classification loss with a differentiable convex optimization layer encoding the SC-OPF problem with ICNN security constraint. This would likely improve the performance of the ICNN for SC-OPF, since training the model end-to-end in such a manner would align training with the eventual downstream task faced by the model. We leave an implementation and evaluation of this change to future work.

\section{Discussion and Conclusions}

In this work, we proposed a methodology for data-driven training of input-convex neural network classifiers for contingency screening in power systems with zero false negative rate. We show that certifying and enforcing zero false negative rate -- i.e., reliability -- of an ICNN classifier can be achieved by solving a collection of optimization problems, and by incorporating these problems into a differentiable convex optimization layer during ICNN training, we can restrict training to be over the set of provably reliable models. We evaluate the performance of our approach on contingency screening and preventive dispatch on the IEEE 39-bus test system, showing that it achieves good performance, guaranteed reliability, and a significant computational speedup over conventional methods. We anticipate that the computational benefit of our approach will be even more significant for larger-scale power systems and higher-order contingency screening problems.

A number of interesting avenues remain open for future work, including (a) scaling up this approach to enable application to larger-scale power systems; (b) combining this screening approach with, e.g., methods from group testing to achieve comparable speedups for the full contingency analysis problem; and (c) extending this methodology to other applications that require constructing tractable inner approximations to some complicated set, such as learning data-driven and safe inner approximations to AC-OPF feasible regions or electric vehicle aggregate flexibility sets.

%%
%% The acknowledgments section is defined using the "acks" environment
%% (and NOT an unnumbered section). This ensures the proper
%% identification of the section in the article metadata, and the
%% consistent spelling of the heading.
\begin{acks}
The authors thank Carey Priebe, Hayden Helm, and Christopher Yeh for illuminating discussions, and Kate Lytvynets for technical assistance. The authors acknowledge support from an NSF Graduate Research Fellowship (DGE-2139433), NSF Grant ECCS-1942326, the Resnick Sustainability Institute, a Caltech S2I Grant, and C3.ai Award \#11015.

\end{acks}

%%
%% The next two lines define the bibliography style to be used, and
%% the bibliography file.
\bibliographystyle{ACM-Reference-Format}
\bibliography{main}

\end{document}